\DeclareMathOperator*{\argmin}{\arg\!\min}
\title{Efficient Processing of Subsequent Densest Subgraph Query}
\author{%
  Chia-Yang Hung and Chih-Ya Shen\\
  \texttt{s110062522@m110.nthu.edu.tw, chihya@cs.nthu.edu.tw} \\
  % examples of more authors
  % \AND
  % Coauthor \\
%   % Affiliation \\
%   % Address \\
%   % \texttt{email} \\
%   % \And
%   % Coauthor \\
%   % Affiliation \\
%   % Address \\
%   % \texttt{email} \\
%   % \And
%   % Coauthor \\
%   % Affiliation \\
%   % Address \\
%   % \texttt{email} \\
}
\institute{Department of Computer Science\\
  National Tsing Hua University\\}
\begin{document}
% \nipsfinalcopy is no longer used

\maketitle

\begin{abstract}
  Dense subgraph extraction is a fundamental problem in graph analysis and data mining, aimed at identifying cohesive and densely connected substructures within a given graph. It plays a crucial role in various domains, including social network analysis, biological network analysis, recommendation systems, and community detection. However,  extracting a subgraph with the highest node similarity is a lack of exploration. To address this problem, we studied the Member Selection Problem and extended it with a dynamic constraint variant. By incorporating dynamic constraints, our algorithm can adapt to changing conditions or requirements, allowing for more flexible and personalized subgraph extraction. This approach enables the algorithm to provide tailored solutions that meet specific needs, even in scenarios where constraints may vary over time. We also provide the theoretical analysis to show that our algorithm is $\frac{1}{3}$-approximation. Eventually, the experiments show that our algorithm is effective and efficient in tackling the member selection problem with dynamic constraints.
  \keywords{dense subgraph extraction\and  dynamic constraint member selection \and graph analysis \and  approximation algorithms}
\end{abstract}

\section{Introduction}
Graph-based data mining has become a prominent research focus in the field of data science. With the need to represent data existence using graph structures in domains such as biology, chemistry, social networks, and financial systems, graph mining holds immense value. The task of finding or labeling specific types of subgraph structures within a graph is a fundamental problem in discrete mathematics and finds extensive applications in the field of graph mining.\\
The utilization of graph mining is widespread across various domains. In biology, it can be used to analyze protein-protein interaction networks or gene regulatory networks, enabling the identification of functional modules or disease-related pathways. In chemistry, graph mining techniques are employed to discover molecular structures or chemical compounds with specific properties. In social networks, graph mining aids in understanding the dynamics of communities, identifying influential users, and detecting patterns of interaction. In financial systems, it can help analyze complex networks of transactions and identify fraudulent activities or systemic risks.\\
Among the structures analyzed by researchers, there is a problem known as the Member Selection problem (MSP). The objective of this problem is to develop an algorithm that can find, within polynomial time, an induced subgraph in a given graph with the highest average similarity among its nodes. However, it has been proven that this problem is NP-hard and inapproximable, meaning it is computationally challenging to find an optimal solution in a reasonable amount of time.\\
The NP-hardness of the subgraph selection problem implies that it belongs to a class of computationally difficult problems for which no known efficient algorithm exists. As a result, researchers often resort to heuristic or approximation algorithms to tackle this problem. These algorithms aim to provide reasonably good solutions within a reasonable amount of time, even though they may not guarantee an optimal solution.\\
Indeed, approximation algorithms have been proposed that demonstrate the ability to find approximate solutions for the MSSG problem. The algorithm offers solutions that are reasonably close to the optimal solution. However, in the long run, repeatedly running the same algorithm for different requirements on the same graph may lead to unnecessary time wastage. Therefore, it is desirable to minimize the time overhead associated with running the algorithm for different needs on the same graph. For instance, consider a social networking platform that aims to form interest-based communities or discussion groups for its users. Each host has specific preferences and constraints regarding the size and composition of the group they want to be a part of. However, these preferences can change over time as hosts discover new interests or undergo personal growth. Additionally, the platform may introduce new features or algorithms to enhance the group formation process, leading to evolving constraints and requirements. In such a dynamic environment, the Dynamic Constraint Member Selection Problem (DCMSP) becomes crucial. DCMSP focuses on adapting to changing size and similarity constraints when forming these interest-based groups. By incorporating dynamic constraint management algorithms, the platform can dynamically adjust the group composition to match the evolving preferences and constraints of its users.

The contributions of this paper can be summarized as follows:
%\vspace{-10pt}
\begin{itemize}
    \item The Problem Formulation: This paper introduces the Dynamic Constraint Member Selection Problem (DCMSP), which addresses the challenges posed by diverse requirements and complexities. DCMSP aims to tackle the ever-changing demands and constraints in the process of member selection.
    \item Complexity Analysis: To tackle the efficiency of dynamic need on different situation, our algorithm has been proved that it is also a $\frac{1}{3}$-approximation algorithm.
    \item Experimental Evaluation: The paper conducts experiments using real datasets to evaluate the performance of the proposed algorithm and compare the computation time to other baselines. Through rigorous evaluation and comparison, the algorithm demonstrates superior performance and effectiveness in addressing the problem at hand.
\end{itemize}

\section{Related Works}
\subsection{Dense Subgraph Extraction}
 In recent years, the extraction of dense subgraphs from large graphs plays a crucial role in various application domains. There has been significant interest in studying various complex networks such as the World Wide Web, social networks, and biological networks. In the context of the Web graph, dense subgraphs can indicate thematic groups or even spam link farms \cite{10.1145/1244408.1244417}. In the financial domain, one application of extracting dense subgraphs includes the identification of price value motifs. This approach has been employed among other methods for analyzing financial data \cite{10.1145/1557019.1557142}. In order to confront this challenge, Huang et al. address the problem of searching for the closest community using a $k$-truss based community model \cite{huang2015approximate}. Li et al. investigate the extraction of a set of $k$-core subgraphs aiming to maximize the minimum node weight \cite{10.14778/2735479.2735484}. In addition, a set of dense subgraph extraction problems were integrated with various factors, such as spatial~\cite{yang2012socio,shen2015socio,shen2020activity}, temporal~\cite{chen2018efficient}, skills~\cite{yang2021learning,shen2017finding}, and many others~\cite{huang2009provable,hsu2020wmego,shuai2013pattern,yang2022enhancing}.
 As a result, the problem of extracting dense subgraphs has garnered substantial attention.
% \subsection{Member Selection for Online Support Group}

\subsection{Densest Subgraph Problem and Densest k-subgraph Problem}
The Densest Subgraph Problem (DSP) is a branch of the broader field of dense subgraph extraction. In DSP, the focus is specifically on identifying the subset of vertices that maximizes the degree density. Spefically, DSP is a well-known formulation that seeks to identify a subset of vertices $S \subseteq V$ that maximizes the degree density of $S$. The degree density, denoted as $d(S)$, is defined as the ratio of the number of edges within $S$, denoted as $e[S]$, to the cardinality of $S$, denoted as $|S|$. The objective is to find the subset of vertices that achieves the highest degree density $d(S)=\frac{e[S]}{|S|}$. Andrew V. Goldberg's groundbreaking contribution was the development of a polynomial time algorithm that utilizes a max flow technique to identify the maximum density subgraph \cite{goldberg1984finding}. Charikar's work made a significant contribution to the advancement of fast algorithms for the Densest Subgraph (DSP) \cite{10.5555/646688.702972}. His research introduced influential developments that have greatly impacted the efficiency and effectiveness of algorithms designed for solving these problems. \\
On the other hand, size-constrained versions of the DSP have been extensively studied in the literature, as they offer practical utility in applications where the size of the solution must be controlled. Variants of DSP exist that impose specific size requirements on the output containing exactly k nodes. In formulation, densest $k$-subgraph problem intent to find a induced subgraph $S$ that maximizes $d(S)=\frac{e[S]}{|S|}$ subject to $|S|=k$. Although DSP itself can be solved in polynomial time, the introduction of size constraints makes the problem computationally challenging. U. Feige et al. developed a polynoimal time algorithm which approximation ratio is $n^\delta$, where $\delta$ is a constant approximately equal to $\frac{1}{3}$ \cite{feige2001dense}. Bhaskara et al. have proposed another algorithm that achieves an approximation ratio of $\mathcal{O}(n^\frac{1}{4})$. Their algorithm is also inspired by studying an average case version of the problem, specifically focusing on differentiating between random graphs and random graphs with planted dense subgraphs. In specific, their algorithms involve a clever approach of counting appropriately defined trees of constant size within the graph $G$. By utilizing these counts, they are able to identify the vertices that belong to the dense subgraph. These advancements contribute to the efficiency and effectiveness of solving the size-constrained DSP.

\section{Preliminaries}\label{Preliminaries}
In this section, we first provide some background material on graphs and the Member Selection Problem.
Let one heterogeneous graph $G=(V,E,S)$, where $V=\{v_1, v_2, \cdots, v_n\}$ is the set of $n$ nodes, $E\subseteq V\times V$ is the set of edges, and $S$ represent the similarity edges between two nodes. A weight, denoted as $w\in (0,1]$, is assigned to each similarity edge $s\in S$. In general, given an attributed graph $G=(V,E,X)$, where $X\in\mathbb{R}^{n\times d}$ is the attributed matrix. We can model the $w[u,v]=w[v,u]=\sqrt{\sum_{\forall i}(1-|X_{u,i}-X_{v,i}|)^2}$. Squared Euclidean Distance has been proved to be useful in grouping node according to their attributes \cite{abramowitz2003symptom}. Next, we introduce Member Selection Problem (MSP) proposed by shen et al \cite{10.1145/2806416.2806423}. MSP aims to identify an induced subgraph $F \subseteq G$ that adheres to the following principles:
\begin{enumerate}
    \item Maximize the average similarity between all nodes in $F$.
    \item Ensure the group size $|F| > p$, where $p$ is the predefined size constraint.
    \item Every two nodes in $F$ has edge, i.e., $F$ is a complete graph. 
    \item Ensure that every node in $F$ has a least one weight edge is larger than similarity constraint $s$.
\end{enumerate}
For the first one of these, we first define the total weight function $W:G\rightarrow\mathbb{R}$ by $\forall F \subset G$ $$W(F) = \sum_{u,v\in F}w[u,v]$$
and next define the average similarity function $\alpha:F\rightarrow\mathbb{R}$ by $\alpha(F)=\frac{W(F)}{|F|}$. And the third principles can be formulated as $\forall v\in F, \exists u\in F$ such that $w[u,v]=w[u,v]>s$. Member Selection problem has been proves that it is a NP-Hard problem and inapproximable with any factor unless $P=NP$ \cite{10.1145/2806416.2806423}. 

% \begin{algorithm}
% \caption{SGSEL}
% \scriptsize
% \begin{algorithmic}[1]
% \Require Graph $G=(V, E, S)$, similarity constraint $s$, and size constraint $p$
% \Ensure 3-Approximation solution $F$
% \State $V\leftarrow A$ $\triangleleft$ We remove vertex v from V if all of its incident similarity edges have edge weights smaller than s.
% \State $T\leftarrow V$ $\triangleleft$ T represents the set of vertices that have not been selected as reference vertices yet.
% \State $F^{APX} \leftarrow \varnothing$ $\triangleleft$ Initialize the approximation solution $F$
% \While{ $T \neq \varnothing$}
%     \State $v\leftarrow \argmin_{u\in T} I_G(u)$ $\triangleleft$ $I_G(u)$ is defined as the sum of the edge weights between vertex $u$ and all other vertices $t$ in $G$.
%     \State $T\leftarrow T \setminus \{v\}$
%     \State $C_v \leftarrow \{v\} \cup \mathcal{N}(v)$ 
%     \If {$|C_v| < p$}
%         \State continue;
%     \EndIf
%     \State Let $\Gamma_1 \leftarrow |C_v|$
%     \For {$i \leftarrow 1$ to $|C_v|$}
%         \State $\hat{v_i} \leftarrow \argmin_{u\in\Gamma_i}$ $I_{\Gamma_i}(u)$
%         \State $\Gamma_{i+1} \leftarrow \Gamma_i \setminus \{\hat{v_i}\}$
%     \EndFor
%     \State Let $\Gamma_v^*$ be the $\Gamma_i$ with maximum $\alpha(\Gamma_i)$ and $|\Gamma_i| \geq p$, $\forall i$
%     \If{$\alpha(\Gamma_v^*) > \alpha(F^{APX})$}
%         \State $F^{APX} \leftarrow \Gamma_v^*$
%     \EndIf
% \EndWhile
% \State PostProcessing($F^{APX}$)
% \State \Return $F^{APX}$
% \end{algorithmic}
% \end{algorithm}

Although MSP is inapproximable with any factor, Shen et al. still provide a $\frac{1}{3}$-approximation solution called Support Group Member Selection (SGSEL) when we relaxed the constraint of complete graph in $F$ \cite{10.1145/2806416.2806423}. SGSEL extracts a subgraph $F$ starting from a 2-clique $C_v$ with reference node $v$. Next, in the second step, SGSEL iteratively eliminates vertices within each $C_v$ that have smallest incident similarity. To address the complete requirement of the Member Selection problem, SGSEL propose a post-processing procedure that customizes the 2-clique solution $F^{APX}$. This procedure ensures that the selected members in $F^{APX}$ meet the desired complete criteria, further enhancing the objective value $\alpha$ of the MSP solution. As mention above, it has been theorical proved that SGSEL is a 3-approximation solution on member selection problem without complete graph constraint. That is, let $F^{SGSEL}$ be an approximation solution identity by SGSEL and $F^{OPT}$ be the optimal solution in MSP, we have $\alpha(F^{SGSEL}) > \frac{\alpha(F^{OPT})}{3}$. \\

\begin{figure}
    \centering
    \includegraphics[scale=0.5]{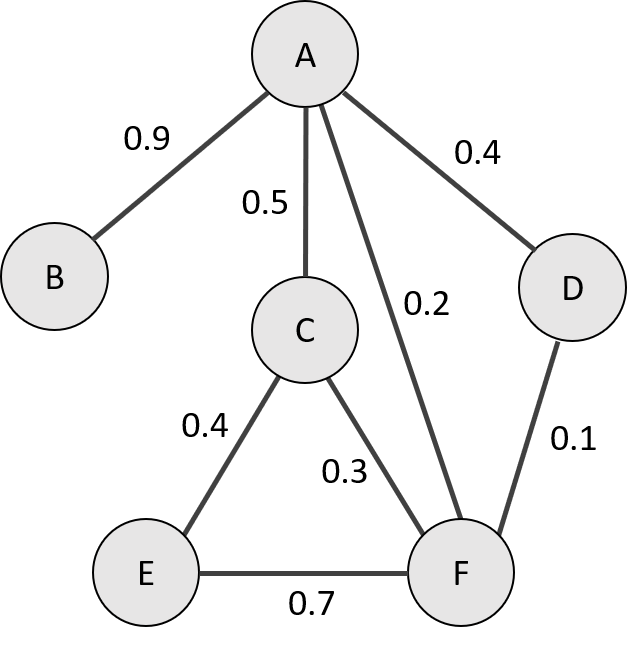}
    \caption{Example for SGSEL}
    \label{fig:example_1}
\end{figure}

\textbf{Example.} Consider the example in figure \ref{fig:example_1}. SGSEL iteratively selects nodes as reference node in graph. After SGSEL choosing CC as reference node and expand its 2-clique $C_v=\{A,C,E,F\}$ as $\Gamma_1$. In first iteration, AA is chosen to remove because $A$ has minimum $I_{\Gamma_1}(u)$, $\forall u \in \Gamma_1$. Thus, $\Gamma_2 = \{C,E,F\}$. In next iteration, SGSEL remove $C$ since $I_{\Gamma_2}(C)=0.7$ is the smallest. This procedure repeats until $\Gamma_4 = \varnothing$. Then SGSEL take $\Gamma_1$ as $\Gamma_C^*$ which has maximum objective value $\alpha(\Gamma_1)=0.525$, $\forall \Gamma_i$. Afterward, SGSEL choosing next reference node DD and continue above process. Eventually, SGSEL return the solution $F^{APX} = \{A,C,E,F\}$.

\section{Problem Formulation}
Due to the design of the MSP algorithm, it fails to consider the process under changing conditions. Moreover, as the number of requirements increases, it becomes evident that we need more flexible algorithms to effectively address the challenges presented by different analysis scenarios. Thus, we describe the Dynamic Constraint Member Selection Problem (DCMSP). DCMSP is focus on the change of size or similarity constraint. Moreover, we present the Extended Dynamic Constraint Member Selection Problem (EDCMSP). EDCMSP incorporates the idea of divide and conquer, breaking down DCMSP into two subgraphs, to expedite the expansion of solutions for DCMSP. In this paper, we will discuss the part of size constraint $p$. 
\subsection{DCMSP}
In order to better describe the dynamic constraints of parameter $p + \Delta p$, it is necessary to introduce a more comprehensive and flexible framework. Consider $G=(V,E,S)$, the objective of DCMSP is to find an induced subgraph $F\subset G$ that follow to the following three principles:
\begin{enumerate}
    \item Maximize the average similarity $\alpha(F)$ between all nodes in $F$.
    \item Ensure the group size $|F| > p+\Delta p$, where $p$ is the size constraint and $\Delta p$ is the \textbf{dynamic} variable.
    \item Ensure that every node in $F$ has a least one weight edge is larger than similarity constraint $s$.
\end{enumerate}
As the parameter $\Delta p$ decreases, it corresponds to a relaxation of the overall constraints. Therefore, there is no need for further discussion regarding this scenario.
\subsection{EDCMSP}
In EDCMSP, We can approach this problem from a differnet perspective. If we have already solved the member selection problem in graph $G$ with size constraint $p$ and identified an approximate solution $F$. We consider the remaining portion of the original graph $G\setminus F$ as a newly added graph. And next, treat the new constraints $\Delta p$ as constraints specific to this newly added graph. In general, we can re-formulate this perspective as: Given two disjoint heterogeneous graph $G_1, G_2$ with same similarity constraint $s$ and different size constraint $p_1, p_2$, respectively. We intent to find a induced subgraph $F' \subseteq G_1 \cup G_2 $ such that 
\begin{enumerate*}[label=(\roman*)]
    \item Maximize the average similarity $\alpha(F)$ between all nodes in $F'$.
    \item Ensure the group size $|F'| > p_1+p_2$.
    \item Ensure that every node in $F$ has a least one weight edge is larger than similarity constraint $s$.
\end{enumerate*}
Note that $F$ and $G\setminus F$ might have some edges, but $G_1$ and $G_2$ not. To tackle this, we might manually add some edges between $G_1$ and $G_2$.

\section{Algorithm Design}
Although SGSEL has been proven to be able to identify an induced subgraph that serves as a $\frac{1}{3}$-approximation solution for the member selection problem \cite{10.1145/2806416.2806423}, there are certain limitations and considerations that need to be taken into account. Here we propose the algorithm called Dynamic Constrain member SELecition (DCSEL) aim to solve the DCMSP efficiently. DCSEL first extract an induced subgraph $F_1$ by modified SGSEL, which eliminate the two cliques expansion in SGSEL. According to DCMSP requirements, when the size constraint $p$ increases, DCSEL will search $F_2$ in $G \setminus F_1$. And then, DCSEL concatenates two subgraph and add some nodes which one of two endpoints is inside $F_1 \cup F_2$. \\
Specifically, DCSEL greedily removes nodes based on their total similarity $W(u)$ and records the best subgraph at the current size. While the size constraint $p$ decreasing, DCSEL check that is there exists any subgraph $F'$ such that $|F_1| > |F'| > p$ and $\alpha(F') > \alpha(F_1)$. Note that $|F_1|$ must larger than $|F|$, otherwise, $F'$ must return in first step. On the other hand, as the size constraint $p$ gradually increases to become larger than $F_1$, DCSEL will automatically search for $F_2$ in $G\setminus F$ and merge the two subgraphs together. And optional, greedyily select edges which one of the two end points is in $F_1 \cup F_2$. Note we call the last procedure is optional because we don't want the last procedure damage our objective value $\alpha(F_1\cup F_2)$.

\begin{algorithm}
\caption{DCSEL}
\begin{algorithmic}[1]
\Require Graph $G=(V, E, S)$, similarity constraint $s$, and size constraint $P=\{p_1, p_2, \cdots, p_k\}$
\Ensure 3-Approximation solution $F$
\State $A=\{-1, \cdots, -1\}, B=\{\varnothing, \cdots,\varnothing\} \triangleleft$ A, B denoted the objective value and optimal solution corresponding to their size.
\State $F_1, A, B \leftarrow$ Modified SGSEL$(G, s, p_1)$
\For {$i \leftarrow 2$ to $k$}
    \If {$p_{i} - p_{i-1} < 0$}
        \If {$\exists F \in B$, $\alpha(F) > \alpha(F_{i-1})$}
            \State $F_i\leftarrow F$
            
        \Else
            \State $F_i \leftarrow F_{i-1}$
        \EndIf
        \State continue;
    \EndIf
    \If {$p_i < |F_{i-1}|$}
        \State $F_{i} \leftarrow F_{i-1}$
        \State Continue;
    \EndIf
    \State $F_{i} \leftarrow$ Modified SGSEL$(G\setminus F_{i-1}, s, p_i-p_{i-1}) \cup F_{i-1}$
    \While {$I_{F_{i}}(u_i) > \alpha(F_{i}) $}
        \State $F_{i} \leftarrow F_{i} \cup \{u_i\}$
    \EndWhile
\EndFor
\State \Return $F_k$
\end{algorithmic}

\end{algorithm}

\section{Theoretical Analysis}
In this section, we will discuss the approximation ratio of our algorithm on EDCMSP. First, we define some notation in our proof. Next, we will discuss the problem in three cases. Last, we will show that our algorithm gives a $\frac{1}{3}$-approximation solution for both EDCMSP and DCMSP. 

In order to facilitate the subsequent proof and better elucidate our arguments, we need to define the following symbols:
\begin{enumerate*}[label=(\roman*)]
% \begin{enumerate}
    \item $F_i$ is the $\frac{1}{3}$-approximation solution of EDCMSP on the graph $G_i$ while considering the size constraint $p_i$ and the similarity constraint $s$.
    \item $OPT_i$ is the optimal solution of EDCMSP on the graph $G_i$.
    \item $\mathbb{F}$ is the optimal solution of EDCMSP on the graph $G_1\cup G_2$ while considering the size constraint $p_1 + p_2$ and the similarity constraint $s$.
    \item $\mathbb{F}_i$ refers to the set of nodes in $\mathbb{F}$ that belong to $G_i$. i.e., $\mathbb{F}_1 \subset G_1$ and $\mathbb{F}_1\cap G_2 = \varnothing$.
    \item $\mathbb{E}$ represents the collection of edges that have their two endpoints in different graphs.
% \end{enumerate}
\end{enumerate*}
In other words, $\mathbb{F}$ can be decomposed into three components, namely $\mathbb{F}_1$, $\mathbb{F}_2$, and $\mathbb{E}$. This observation inspires us to establish individual bounds for each of these components.

\begin{algorithm}
\scriptsize
\caption{Modified SGSEL}
\begin{algorithmic}[1]
\Require Graph $G=(V, E, S)$, similarity constraint $s$, and size constraint $p$
\Ensure 3-Approximation solution $F$
\State $F^{APX} \leftarrow \varnothing \triangleleft$ Initialize the approximation solution F.
\State $F \leftarrow V$
\State $A=\{-1, \cdots, -1\}, B=\{\varnothing, \cdots,\varnothing\}$ 
\While{ $F \neq \varnothing$}
    \State $v\leftarrow \argmin_{u\in F} I_{F}(u)$ $\triangleleft$ $I_{F}(u)$ is defined as the sum of the edge weights between vertex $u$ and all other vertices $t$ in $F$.
    \State $F\leftarrow F\setminus\{v\}$
    \If{$\alpha(F) > \alpha(F^{APX}) $}
        \State $F^{APX} \leftarrow F$
    \EndIf
    \If {$\alpha(F) > A[|F|]$}
        \State $A[|F|]\leftarrow \alpha(F)$
        \State $B[|F|]\leftarrow F$
    \EndIf
\EndWhile
\State \Return $F^{APX}, A, B$
\end{algorithmic}
\end{algorithm}

\begin{lemma} \label{easiest_case}
    Suppose $\alpha(\mathbb{F}) \leq \alpha(OPT_1)$ and $\alpha(\mathbb{F}) \leq \alpha(OPT_2)$. Then $F_1 \cup F_2$ is a $\frac{1}{3}$-approximation solution of EDCMSP. i.e., $\alpha(F_1 \cup F_2) \geq \frac{1}{3}\alpha(\mathbb{F}) $. 
\end{lemma}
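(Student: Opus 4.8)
The plan is to treat this as the benign case where the global optimum $\mathbb{F}$ is no denser than either local optimum, so that the separately computed solutions, simply glued together, already compete; the $\mathbb{F}_1/\mathbb{F}_2/\mathbb{E}$ decomposition is not needed here. First I would check that $F_1 \cup F_2$ is a feasible EDCMSP solution on $G_1 \cup G_2$. Since $G_1$ and $G_2$ are vertex-disjoint, $|F_1 \cup F_2| = |F_1| + |F_2| > p_1 + p_2$, so the size constraint (ii) holds; and the per-vertex similarity requirement (iii) is inherited from $F_1$ and $F_2$, because any edge certifying a vertex inside $F_i$ still lies inside $F_1 \cup F_2$, and the optional cross edges that DCSEL may insert only add positive weight without removing any vertex's certifying edge. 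Hence $F_1 \cup F_2$ is a legitimate competitor against $\mathbb{F}$.

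Next I would bound the objective from below. Splitting the total weight as $W(F_1 \cup F_2) = W(F_1) + W(F_2) + w^{\times}$, where $w^{\times} \geq 0$ is the total weight of edges with one endpoint in $F_1$ and one in $F_2$ (nonnegative since all weights lie in $(0,1]$), we obtain $\alpha(F_1 \cup F_2) \geq \frac{W(F_1) + W(F_2)}{|F_1| + |F_2|}$. Applying the elementary mediant inequality $\frac{a_1 + a_2}{b_1 + b_2} \geq \min\!\big(\frac{a_1}{b_1}, \frac{a_2}{b_2}\big)$ with $a_i = W(F_i)$ and $b_i = |F_i| > 0$ then yields $\alpha(F_1 \cup F_2) \geq \min\big(\alpha(F_1), \alpha(F_2)\big)$.

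Finally I would invoke the per-graph guarantee of Modified SGSEL, namely $\alpha(F_i) \geq \frac{1}{3}\alpha(OPT_i)$ for $i=1,2$, together with the two hypotheses $\alpha(\mathbb{F}) \leq \alpha(OPT_1)$ and $\alpha(\mathbb{F}) \leq \alpha(OPT_2)$, to chain $\alpha(F_1 \cup F_2) \geq \min(\alpha(F_1), \alpha(F_2)) \geq \frac{1}{3}\min(\alpha(OPT_1), \alpha(OPT_2)) \geq \frac{1}{3}\alpha(\mathbb{F})$, which is exactly the claim. The argument is short and I do not expect a real obstacle (this is the ``easiest case''); the only thing requiring genuine care is the feasibility bookkeeping in the first step, and, if one wants the bound to carry over verbatim to the actual DCSEL output rather than the bare union $F_1 \cup F_2$, the observation that the while-loop of Modified SGSEL adds a vertex $u_i$ only when $I_{F_i}(u_i) > \alpha(F_i)$, a condition equivalent to $\alpha(F_i \cup \{u_i\}) > \alpha(F_i)$, so the objective never decreases under that optional step.
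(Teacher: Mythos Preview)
Your proposal is correct and follows essentially the same route as the paper: split $W(F_1\cup F_2)=W(F_1)+W(F_2)+(\text{cross weight})$, drop the nonnegative cross term, and combine the per-graph $\tfrac{1}{3}$-guarantee with the hypotheses $\alpha(\mathbb{F})\le\alpha(OPT_i)$. The only cosmetic difference is that the paper substitutes $W(F_i)\ge \tfrac{1}{3}\alpha(OPT_i)\,|F_i|$ directly into the fraction and simplifies the resulting weighted average, whereas you pass through the mediant inequality to reach $\min(\alpha(F_1),\alpha(F_2))$ first; the two computations are equivalent, and your extra feasibility and while-loop remarks are sound additions the paper omits.
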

\begin{proof}
This can easily prove by analysis the relationship between $F_1 \cup F_2$ and $\mathbb{F}$. 
\begin{align*}
\alpha(F_1 \cup F_2) &= \frac{W(F_1) + W(F_2) + \Delta}{|F_1| + |F_2|} 
                     > \frac{W(F_1) + W(F_2) }{|F_1| + |F_2|}  \\  
                     &> \frac{\frac{1}{3}\alpha(OPT_1)|F_1|+\frac{1}{3}\alpha(OPT_2)|F_2|}{|F_1| + |F_2|} 
                     \geq \frac{\frac{1}{3} \alpha(\mathbb{F}) |F_1|+\frac{1}{3} \alpha(\mathbb{F}) |F_2|}{|F_1| + |F_2|} 
                     = \frac{1}{3}\alpha(\mathbb{F})
\end{align*}
Note that $\Delta > 0$ is the total edge weight whose two endpoint in two different graphs $G_1$ and $G_2$. The third inequality is because $W(F_1) = \alpha(F_1)|F_1| \geq \frac{1}{3}\alpha(OPT_1)|F_1|$, so does $W(F_2) \geq \frac{1}{3}\alpha(OPT_2)|F_2|$.

\end{proof}

\begin{lemma} \label{weighted_sum}
    Suppose we have two disjoint subgraph of graph $G$, say $A$ and $B$. If $\alpha(A) < \alpha(B)$, then $ \alpha(A)<\alpha(A\cup B)$.
\end{lemma}
\begin{proof}
\begin{align*}
\alpha(A \cup B) &= \frac{W(A) + W(B) + \Delta_{AB}}{|A| + |B|}
                 &> \frac{\alpha(A)|A| + \alpha(B)|B|}{|A| + |B|}
                 &> \frac{\alpha(A)|A| + \alpha(A)|B|}{|A| + |B|}\\
                 &= \alpha(A)
\end{align*}
Note that $\Delta_{AB} > 0$ is the total edge weight whose two endpoint in two different graphs $A$ and $B$.

\end{proof}

\begin{theorem} 
    Given a graph $G$ with similarity constraint $s$ and size constraint $p$, and a $\frac{1}{3}-$approximation solution $F_1$ given by SGSEL. Given any Dynamic size constraint $\Delta p$, the DCSEL will return a solution with ratio $\frac{1}{3}$.
\end{theorem}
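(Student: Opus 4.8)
The plan is to prove the ratio branch by branch, following the control flow of \textsc{DCSEL}, after phrasing the single update as an EDCMSP instance. Write $p' = p+\Delta p$ for the updated size constraint, and let $\mathbb{F}$ be an optimal DCMSP solution on $G$ under constraint $p'$, so the goal is to show the returned subgraph has average similarity at least $\frac{1}{3}\,\alpha(\mathbb{F})$. When the expansion step fires, let $H$ be the subgraph returned by \textsc{Modified SGSEL} run on $G\setminus F_1$ with similarity constraint $s$ and size constraint $\Delta p$. I will instantiate the EDCMSP framework with $G_1 = F_1$ (with its induced edges), $G_2 = G\setminus F_1$, $p_1 = p$, $p_2 = \Delta p$, so that $G_1\cup G_2 = G$; here $F_1$ is a $\frac{1}{3}$-approximation of $OPT_1$, because any feasible subgraph of $G_1 = F_1$ of size larger than $p$ is in particular a subgraph of $G$ of size larger than $p$, whence $\alpha(OPT_1)\le\alpha(OPT(G,p))\le 3\,\alpha(F_1)$, and $H$ is a $\frac{1}{3}$-approximation of $OPT_2$ by the guarantee of \textsc{Modified SGSEL} on $G_2$.

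I would first dispatch the two branches with no re-expansion. If $\Delta p<0$, observe that the peeling loop inside the line-2 call of \textsc{Modified SGSEL} is independent of the size constraint and records in $B$ the densest subgraph of each size it encounters; hence the densest recorded subgraph of size larger than $p'$ coincides with what \textsc{Modified SGSEL} would return under the relaxed constraint $p'$, so it is at least $\frac{1}{3}\,\alpha(OPT(G,p'))$, and \textsc{DCSEL} returns a subgraph at least this good. If $\Delta p\ge 0$ but $p'<|F_1|$, then \textsc{DCSEL} returns $F_1$, which is feasible for $p'$ since $|F_1|>p'$; and because tightening the size bound cannot raise the optimum, $\alpha(F_1)\ge\frac{1}{3}\,\alpha(OPT(G,p))\ge\frac{1}{3}\,\alpha(OPT(G,p'))$.

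The expansion branch, $\Delta p\ge 0$ with $p'\ge|F_1|$, is the heart of the proof. Feasibility is immediate: $|F_1\cup H| = |F_1|+|H| > p+\Delta p = p'$ using $|F_1|>p$ and $|H|>\Delta p$, and the subsequent while-loop only enlarges the set. For the ratio, decompose $\mathbb{F}$ into its components $\mathbb{F}_1$, $\mathbb{F}_2$, and $\mathbb{E}$ as in the EDCMSP notation: Lemma~\ref{easiest_case} covers the sub-case $\alpha(\mathbb{F})\le\min(\alpha(OPT_1),\alpha(OPT_2))$, and the two remaining sub-cases of the EDCMSP analysis give $\alpha(F_1\cup H)\ge\frac{1}{3}\,\alpha(\mathbb{F})$. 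It then remains to check that the while-loop does no harm: each step appends a vertex $u$ with $I_{F}(u)>\alpha(F)$ for the current set $F$, and this inequality is exactly equivalent to $\alpha(F\cup\{u\})>\alpha(F)$, so $\alpha$ is non-decreasing along the loop and the returned set inherits $\alpha\ge\frac{1}{3}\,\alpha(\mathbb{F})$; Lemma~\ref{weighted_sum} plays the analogous role for the intermediate merges inside the EDCMSP sub-cases.

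I expect the main obstacle to be those two remaining EDCMSP sub-cases. Once $\alpha(\mathbb{F})$ exceeds $\alpha(OPT_1)$ or $\alpha(OPT_2)$, the surplus comes either from the cross edges $\mathbb{E}$ or from $\mathbb{F}_1$ or $\mathbb{F}_2$ being too small to be feasible inside $G_1$ or $G_2$, so $W(\mathbb{F}_i)$ can no longer be bounded by $\alpha(OPT_i)\,|\mathbb{F}_i|$; the argument must instead charge $W(\mathbb{E})$ and the infeasible remainder against the boundary vertices that \textsc{DCSEL} absorbs in its final loop and verify that this charging loses at most a factor of $3$. A smaller point still worth nailing down is that neither the boundary additions nor the choice of recorded subgraph in the $\Delta p<0$ branch breaks the requirement that every selected node carry an incident edge of weight larger than $s$, which is handled exactly as in \textsc{SGSEL}.
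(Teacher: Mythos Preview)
Your proposal is correct and follows essentially the same route as the paper: cast the update as an EDCMSP instance with $G_1=F_1$ and $G_2=G\setminus F_1$, dispose of the non-expansion branches ($\Delta p<0$ via the recorded array $B$, and $p'<|F_1|$ by returning $F_1$), and for the expansion branch case-split on whether $\mathbb{F}$ lies in one side or meets both, invoking Lemma~\ref{easiest_case} for the easy sub-case and the component bounds on $\mathbb{F}_1,\mathbb{F}_2,\mathbb{E}$ otherwise. You are slightly more careful than the paper about the algorithm's control flow and about the while-loop being monotone in $\alpha$, but the hard EDCMSP sub-cases you flag as the main obstacle are precisely where the paper's own argument is at its sketchiest as well.
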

\begin{proof} First of all, if the dynamic constraint $\Delta p$ is small than $0$, it indicates a loosening of the overall constraints. Since the trimming process of original constraint $\{s, p\}$ also holds the ratio $\frac{1}{3}$. Thus,there is no necessity for additional deliberation on this scenario.\\\\
For $\Delta p > 0$, We can categorize all scenarios into three disjoint cases.
\begin{enumerate}
    \item $\mathbb{F} \cap G_1 = \varnothing$, or say $\mathbb{F} \subseteq G_2$.
    \item $\mathbb{F} \cap G_2 = \varnothing$, or say $\mathbb{F} \subseteq G_1$.
    \item $\mathbb{F} \cap G_1 \neq \varnothing$ and $\mathbb{F} \cap G_2 \neq \varnothing$.
\end{enumerate}
WLOG, if $\mathbb{F} \cap G_1 = \varnothing$, then we have $\alpha(\mathbb{F}) \leq \alpha(OPT_2)$. This is because $\mathbb{F} \subseteq G_2$ and $|\mathbb{F}| > p_2$. Follow the definition of $OPT_2$, $\alpha(\mathbb{F})$ must smaller than $\alpha(OPT_2)$. If $\alpha(\mathbb{F}) \leq \alpha(OPT_1)$, we have proved in $\cref{easiest_case}$. Thus, we can suppose $\alpha(\mathbb{F}) > \alpha(OPT_1)$. On the other hand, for all node $u$ belong to graph $G_1$, we can deduce that $I_\mathbb{F}(u) < \alpha(\mathbb{F})$. If not, $\alpha(\mathbb{F} \cup \{u\})=\frac{W(\mathbb{F}) + I_\mathbb{F}(u) }{|\mathbb{F}| + 1}=\frac{|F|}{|F|+1}\alpha(\mathbb{F}) + \frac{1}{|F|+1} I_\mathbb{F}(u) > \frac{|F|}{|F|+1}\alpha(\mathbb{F}) + \frac{1}{|F|+1} \alpha(\mathbb{F}) = \alpha(\mathbb{F})$ which imply $\mathbb{F} \cup \{u\}$ is better than $\mathbb{F}$ $\rightarrow\!\leftarrow$. Thus, this suggest us can greedily choose the node $t\in G_2$ by $I_{F_2}(t) > \alpha(F_2)$ until satisfy the size constraint $p_1 + p_2$. Since $\mathbb{F} \subseteq G_2$, there must exists enough node that can form the approximation solution.
\end{proof}
Now, we consider the third case: $\mathbb{F} \cap G_1 \neq \varnothing$ and $\mathbb{F} \cap G_2 \neq \varnothing$. Note that $\mathbb{F} = \mathbb{F}_1 \cup \mathbb{F}_2 \cup \mathbb{E}$. As mention before, we will estimate the upper bound each by each. First, without loss of generality, we show $\alpha(\mathbb{F}_1) < 3 \alpha(F_1)$. If $\alpha(\mathbb{F}_1)\leq\alpha(OPT_1)$. By the property of $F_1$, we have $\alpha(\mathbb{F}_1)\leq\alpha(OPT_1)<3\alpha(F_1)$. on the other hand, we suppose $\alpha(\mathbb{F}_1)>\alpha(OPT_1)$. If $\mathbb{F}_1 \cap \alpha(OPT_1) = \varnothing$, then $\mathbb{F}_1 \cup \alpha(OPT_1)$ form a solution better than optimal solution $\alpha(\mathbb{F}_1)$ by $\cref{weighted_sum}$, which lead to contradiction. If $\mathbb{F}_1 \cap \alpha(OPT_1) \neq \varnothing$, say $\mathbb{F}_1 \cap OPT_1 = R$ and $\mathbb{F}_1\setminus OPT_1 = R'$. For any subset $U$ of $R'$, we can claim that $\frac{I_R(U)}{|U|} = \alpha(U)<\alpha(OPT_1)$. Otherwise, $OPT_1 \cup U$ will form an optimal solution on $G_1$, which lead to contradiction. Next, we discover that there exists a subset $V$ belong to $OPT_1 \setminus \mathbb{F}_1$ such that $I_R\setminus V(V) > I_R\setminus V(R')$, where $|V|=|R'|$. If not, replacing $R'$ with $V$ can yield another optimal solution, which lead to contradiction. Finally, we can use this $V$ to form a better component in EDCMSP, say $\mathbb{F}_1 \cup V$, which leads a contradiction. Similarily, we can say $\alpha(\mathbb{F}_2) < 3\alpha(F_2)$. Last of all, we are going to show that our algorithm picking a node set, denoted by $\Delta$, is an upper bound of $W(\mathbb{E})$. Since $\Delta$ is composed by greedily picking up the node in those which one of the node has already in $F_1$, but not $F_2$, and vice versa. Thus, we set a stopping criteria that can bound $\frac{1}{3}W(\mathbb{E})$ and not decrease the average similarity. Therefore, each component of $\mathbb{F}$ is bound by our algorithm selected subgraph. In other words, our algorithm is a $\frac{1}{3}$-approximation algorithm.   
\begin{lemma} \label{positive_delta_s}
    Given a graph $G$ with similarity constraint $s$ and size constraint $p$, and a $\frac{1}{3}-$approximation solution $F_1$ given by SGSEL. If Dynamic similarity constraint $\Delta s>0$, then DCSEL will return a solution with ratio $\frac{1}{3}$.
\end{lemma}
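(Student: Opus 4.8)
The plan is to mirror the structure of the size‑constraint theorem, since raising the similarity threshold from $s$ to $s+\Delta s$ is, from the point of view of the algorithm, just another way of shrinking the feasible set of solutions — exactly like the $\Delta p < 0$ case treated at the start of the previous theorem. First I would fix notation: let $OPT^{s}$ and $OPT^{s+\Delta s}$ denote the optimal EDCMSP solutions on $G$ under the two similarity constraints, and let $F_1$ be the $\frac{1}{3}$‑approximation returned by SGSEL under $\{s,p\}$. I would then record the obvious monotonicity fact that any subgraph that is feasible for the constraint $s+\Delta s$ is also feasible for $s$ (every node still has an incident edge heavier than $s$, since it has one heavier than $s+\Delta s > s$), hence the feasible region only shrinks and $\alpha(OPT^{s+\Delta s}) \le \alpha(OPT^{s})$.

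Next I would argue that DCSEL, when handed the tighter constraint, re‑runs the greedy trimming of Modified SGSEL but now discards any node whose best incident similarity fails to exceed $s+\Delta s$, and among the remaining feasible candidate subgraphs stored in the arrays $A,B$ it keeps the one of largest $\alpha$ that still satisfies $|F| > p$. The key step is to show that this restricted search still produces a $\frac{1}{3}$‑approximation of $\alpha(OPT^{s+\Delta s})$. Here I would invoke the SGSEL guarantee essentially verbatim: the analysis of SGSEL that yields $\alpha(F_1) > \frac{1}{3}\alpha(F^{OPT})$ only uses the reference‑node / incident‑weight structure and goes through unchanged when the admissible reference nodes are restricted to those meeting the stronger threshold, because $OPT^{s+\Delta s}$ itself is built only from such nodes. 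So running the same trimming procedure over the subgraph induced by the nodes that survive the $s+\Delta s$ filter gives a solution $F'$ with $\alpha(F') > \frac{1}{3}\alpha(OPT^{s+\Delta s})$; combined with Lemma~\ref{weighted_sum} (to argue the recorded best size‑$q$ subgraph for each $q>p$ dominates any smaller feasible one) we get the claim.

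I expect the main obstacle to be a bookkeeping subtlety rather than a deep one: DCSEL as written caches the trimming trace for the \emph{original} threshold $s$, so I must make sure that the cached subgraphs $B[\cdot]$ actually contain a feasible witness for $s+\Delta s$, or else that the algorithm is understood to re‑trim on the filtered graph. The clean way around this is to observe that $OPT^{s+\Delta s}$ is a feasible solution for the original problem $\{s,p\}$ too, so whatever approximation guarantee SGSEL gives against $OPT^{s}$ it a fortiori gives against the smaller quantity $\alpha(OPT^{s+\Delta s})$; that is, $\alpha(F_1) > \frac{1}{3}\alpha(OPT^{s}) \ge \frac{1}{3}\alpha(OPT^{s+\Delta s})$ whenever $F_1$ itself happens to remain feasible under the new threshold, and when it does not, the re‑trim over the filtered node set handles it. I would close by noting that no merging of a second subgraph is needed here (the size constraint is unchanged), so the argument is strictly simpler than the $\Delta p > 0$ case and the $\frac{1}{3}$ ratio is preserved.
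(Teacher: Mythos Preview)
Your monotonicity argument --- that tightening the similarity threshold can only shrink the feasible region, so $\alpha(OPT^{s+\Delta s}) \le \alpha(OPT^{s})$, and hence any $\tfrac{1}{3}$-approximation under $s$ that happens to stay feasible is automatically a $\tfrac{1}{3}$-approximation under $s+\Delta s$ --- is sound, and your fallback of re-trimming from scratch on the filtered graph when $F_1$ is no longer feasible would also establish the ratio. But this is not the route the paper takes, and your closing remark that ``no merging of a second subgraph is needed here'' is precisely where the two diverge.

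The paper first lets $U$ be the set of nodes all of whose incident similarities fall below $s+\Delta s$, strips $U$ out of the current solution to obtain $F_1\setminus U$, and then splits on the \emph{size} of what remains. If $|F_1\setminus U|\ge p$, it keeps $F_1\setminus U$ and stops. If $|F_1\setminus U|<p$, it treats the shortfall as a dynamic size increment $\Delta p = p - |F_1\setminus U|$ starting from initial constraints $\{|F_1\setminus U|,\, s+\Delta s\}$ and invokes the $\Delta p>0$ machinery of the previous theorem --- including the merge with a second subgraph --- to grow the solution back up. So the paper's proof is a \emph{reduction to} the size-constraint theorem, not an argument that runs parallel to (and simpler than) it. Your version gives a cleaner self-contained proof via monotonicity; the paper's version reuses $F_1\setminus U$ rather than discarding it and re-running, which is the efficiency rationale behind DCSEL in the first place.
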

\begin{proof}
Let $U$ be the set of nodes whose edges similarity are all lower than $s + \Delta s$ in $G$. i.e., our search space of algorithm reduced to $G\setminus U$. Since the increment of similarity constraint eliminate nodes whose has lower edge similarity, thus, it doesn't impact the optimal solution. However, there might be some nodes are in both $U$ and $F_1$. If $|F_1\setminus U| \geq p$, we don't need to modify our result. $F_1\setminus U$ still hold the ratio $\frac{1}{3}$ and the two constraints of our member selection problem. On the other hand, if $|F_1\setminus U|<p$, we need run our DCSEL again to generate approximation solution $F$. In this step, the ratio is still hold, since the procedure of new $F$ is same as the initial constraints \{$|F_1\setminus U|, s+\Delta s$\}, size constraint and similarity constraint respectively, and then with dynamic constraint $\Delta p = p - |F_1\setminus U|$.

\end{proof}

\begin{lemma} \label{negative_delta_s}
    Given a graph $G$ with similarity constraint $s$ and size constraint $p$, and a $\frac{1}{3}-$approximation solution $F_1$ given by SGSEL. If Dynamic similarity constraint $\Delta s<0$, then DCSEL will return a solution with ratio $\frac{1}{3}$.
\end{lemma}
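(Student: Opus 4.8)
The plan is to reduce the $\Delta s<0$ case to situations already settled, rather than re-analysing DCSEL from scratch. First I would record the monotonicity observation that decreasing the similarity constraint only \emph{enlarges} the feasible search space: writing $U_s$ for the set of nodes all of whose incident edge weights are at most $s$, and $U_{s+\Delta s}$ for the analogous set at the looser threshold $s+\Delta s$, we have $U_{s+\Delta s}\subseteq U_s$, so the admissible vertex set grows from $G\setminus U_s$ to $G\setminus U_{s+\Delta s}$. Moreover, since every node of $F_1$ already carries an incident edge of weight strictly greater than $s\ge s+\Delta s$, we have $F_1\cap U_s=\varnothing$, hence $F_1\cap U_{s+\Delta s}=\varnothing$; unlike the $\Delta s>0$ case of \cref{positive_delta_s}, $F_1$ is left untouched and still meets the size constraint, so no re-peeling of the old part is needed.

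Next I would cast the relaxed instance as an instance of the (E)DCMSP problem already handled. Partition the enlarged search space as $(G\setminus U_s)\ \sqcup\ N$ with $N:=U_s\setminus U_{s+\Delta s}$ the set of newly admissible nodes, and set $G_1:=G\setminus U_s$ with size constraint $p_1:=p$ (on which $F_1$ is, by hypothesis, a $\tfrac13$-approximation of the local optimum $OPT_1$) and $G_2:=N$ with the vacuous size constraint $p_2:=0$, so that $p_1+p_2=p$ is exactly the unchanged requirement. DCSEL's reaction to the relaxation is precisely the $\Delta p\ge 0$ branch of the algorithm applied to this decomposition: run Modified SGSEL on $(G\setminus U_{s+\Delta s})\setminus F_1\ (\supseteq N)$, union the result with $F_1$, and run the greedy absorption loop that adds nodes $u$ with $I_F(u)>\alpha(F)$. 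I would then invoke the three-case analysis of the preceding theorem verbatim, with $\mathbb{F}$ now the optimum of the relaxed problem: if $\mathbb{F}\subseteq G\setminus U_s$ then $\alpha(\mathbb{F})\le\alpha(OPT_1)\le 3\alpha(F_1)$ while $|F_1|>p$ already gives feasibility and the absorption loop only raises $\alpha$ above $\alpha(F_1)$; if $\mathbb{F}\subseteq N$ it is recovered, up to the factor $3$, by the SGSEL run on $N$ together with the size records $A,B$; and if $\mathbb{F}$ straddles both blocks then the componentwise bounds $\alpha(\mathbb{F}_1)<3\alpha(F_1)$ and $\alpha(\mathbb{F}_2)<3\alpha(F_2)$ (the swapping argument) plus the bound of $\tfrac13 W(\mathbb{E})$ by the absorbed node set combine, exactly as in that theorem and using \cref{easiest_case} and \cref{weighted_sum}, to give $\alpha(F)\ge\tfrac13\alpha(\mathbb{F})$. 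As a sanity fallback, simply re-running Modified SGSEL on all of $G\setminus U_{s+\Delta s}$ with constraint $p$ already yields a $\tfrac13$-approximation by the analysis of SGSEL; the decomposition argument is what preserves the efficiency gain of reusing $F_1$.

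The main obstacle I anticipate is the straddling case together with the degenerate bookkeeping $p_2=0$: one must verify that Modified SGSEL on the block $N$ still delivers the ``$\alpha(\mathbb{F}_2)<3\alpha(F_2)$'' guarantee when there is effectively no size floor (i.e.\ that the best-$\alpha$ subgraph recorded while peeling $N$ is within a factor $3$ of the densest admissible subgraph of $N$), and that the edges of $G$ genuinely running between $G\setminus U_s$ and $N$ --- which, unlike the artificial cross-edges of EDCMSP, are already present --- only help, since the corresponding cross-weight $\Delta$ is nonnegative. Re-deriving the swapping argument yielding $\alpha(\mathbb{F}_i)<3\alpha(F_i)$ in this new partition, and checking that the stopping criterion of the absorption loop indeed caps $\tfrac13 W(\mathbb{E})$ without decreasing the average similarity, is the only place where real work is required; everything else follows from the monotonicity observation of the first paragraph.
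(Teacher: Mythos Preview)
Your proposal is correct in spirit and reaches the right conclusion, but it takes a considerably more elaborate route than the paper.

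The paper's own proof is a two-line reduction: it defines $U$ as the set of nodes all of whose incident similarities lie in $(s+\Delta s,\,s)$ (exactly your $N=U_s\setminus U_{s+\Delta s}$), remarks that $U$ is simply fresh search space, and then asserts that running DCSEL to augment $F_1$ from $(G\setminus F_1)\cup U$ is the \emph{same procedure} one would have executed had the initial constraints been $\{p,\,s+\Delta s\}$ --- for which the $\tfrac13$ ratio is already established. No EDCMSP decomposition, no three-case split, no $p_2=0$ edge case.

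By contrast, you rebuild the problem as an EDCMSP instance with $G_1$ the old admissible region, $G_2=N$, $p_1=p$, $p_2=0$, and then replay the full three-case analysis (invoking \cref{easiest_case}, \cref{weighted_sum}, and the swapping argument). This is more systematic and would actually \emph{justify} the equivalence the paper only asserts, but it forces you to handle the degenerate $p_2=0$ bookkeeping that you rightly flag as the chief obstacle. Amusingly, your own ``sanity fallback'' --- that re-running Modified SGSEL on the enlarged space with constraint $p$ already yields a $\tfrac13$-approximation --- is essentially the entirety of the paper's argument; the decomposition you build on top of it is your own addition, aimed at preserving the reuse of $F_1$, which the paper simply takes for granted.
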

\begin{proof}
    Let $U$ be the set of nodes whose edges similarity are all higher than $s + \Delta s$, but lower than $s$ in graph $G$. i.e., $\forall u \in U,$ $s+\Delta s < e(u) < s$. Since $U$ can be seen as a new search space of DCSEL, we can easily run the DCSEL once, adding some nodes to $F_1$ from $(G\setminus F_1)\cup U $ to form new approximation solution $F$. The ratio is hold because the procedure of new $F$ is same as the initial constraints $\{p,$ $ s+\Delta s\}$.
\end{proof}

\begin{theorem} 
    Given a graph $G$ with similarity constraint $s$ and size constraint $p$, and a $\frac{1}{3}-$approximation solution $F_1$ given by SGSEL. Given any Dynamic similarity constraint $\Delta s$, the DCSEL will return a solution with ratio $\frac{1}{3}$.
\end{theorem}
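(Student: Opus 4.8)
The plan is to split on the sign of $\Delta s$ into the three exhaustive cases $\Delta s = 0$, $\Delta s > 0$, and $\Delta s < 0$, and to reduce each case to a statement already established. The case $\Delta s = 0$ is immediate: the constraints are unchanged, DCSEL returns $F_1$ itself, and the hypothesis that $F_1$ is a $\frac13$-approximation closes it. The cases $\Delta s > 0$ and $\Delta s < 0$ are exactly the content of \cref{positive_delta_s} and \cref{negative_delta_s}, so the body of the argument is just to invoke them and to spell out why their hypotheses hold whenever DCSEL is actually run and why the ratio they certify is measured against the correct optimum.

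For $\Delta s > 0$, I would first argue that tightening the similarity bound only deletes the vertex set $U$ that is already infeasible for the new constraint $s+\Delta s$, so the optimal value of the dynamic instance equals the optimum of DCMSP on the filtered graph $G\setminus U$; hence any $\frac13$-approximation on $G\setminus U$ is a $\frac13$-approximation for the dynamic instance. Then I split on whether $|F_1\setminus U|\ge p$. If so, $F_1\setminus U$ still satisfies the size and the stricter similarity constraint, and since pruning a $\frac13$-approximation cannot push its average below $\frac13$ of an optimum that has not grown, $F_1\setminus U$ is the output. If not, DCSEL restarts, and here I would note that the restarted run is precisely a DCMSP instance with base constraints $\{\,|F_1\setminus U|,\, s+\Delta s\,\}$ plus the dynamic increment $\Delta p = p - |F_1\setminus U|$, so the earlier theorem on dynamic size constraints applies verbatim and yields the ratio.

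For $\Delta s < 0$, I would view the vertex set $U$ with incident similarities in $(s+\Delta s,\, s)$ as newly admissible, so the effective search space becomes $(G\setminus F_1)\cup U$ while $F_1$ is retained as the base solution; DCSEL then runs as a DCMSP instance with base constraints $\{p,\, s+\Delta s\}$ driven by the same dynamic-size machinery, and \cref{negative_delta_s} supplies the bound. The main obstacle I anticipate is not any individual inequality but the bookkeeping needed to certify that every re-invocation of DCSEL is a legitimate DCMSP/EDCMSP instance whose optimum coincides with the optimum of the dynamic problem — that is, that similarity filtering never discards a vertex usable by some optimal feasible solution, and that the optional "endpoint in $F_1\cup F_2$" node-addition step is only executed when it does not decrease $\alpha$. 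Once that reduction is justified, the theorem is merely the disjunction of the three cases.
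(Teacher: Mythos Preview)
Your proposal is correct and matches the paper's approach exactly: the paper's proof is a one-line case split that simply cites \cref{positive_delta_s} for $\Delta s>0$ and \cref{negative_delta_s} for $\Delta s<0$, and your three-case decomposition (with the trivial $\Delta s=0$ case added) together with the lemma invocations is precisely that argument, just with the internal reasoning of the two lemmas unpacked inline. The extra bookkeeping you flag about matching optima and feasibility is handled in the paper at the lemma level rather than in this theorem, so your elaboration is faithful but more detailed than what the paper presents here.
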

\begin{proof} 
    The \cref{positive_delta_s} and \cref{negative_delta_s} demonstrate the all cases of similarity constraint adjustment.
\end{proof}

\section{Extra Nodes of Graph Adjustment}
In order to deal with the constantly changing node instances in the real-world network, we have also discussed the scenarios involving changes to nodes on the graph. The problem can be formulated by the following statement: 
\begin{theorem} 
    Giving a graph $G$ with similarity constraint $s$ and size constraint $p$, and a $\frac{1}{3}-$approximation solution $F_1$ given by SGSEL. If there exists another graph $G'$, then our DCSEL algorithm can find an approximation solution with approximation ration $\frac{1}{3}$ on $G\cup G'$.
\end{theorem}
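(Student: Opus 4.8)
The plan is to recognize that the arrival of $G'$ is nothing but an instance of EDCMSP and to invoke the analysis already established. Concretely, set $G_1 := G$ with size constraint $p_1 := p$ — for which $F_1$ is, by hypothesis, the $\frac{1}{3}$-approximation returned by SGSEL — and $G_2 := G'$ with the (possibly trivial) size constraint $p_2$ it carries, both sharing the similarity constraint $s$. Then $G\cup G'$ with constraint $p_1+p_2$ is exactly the EDCMSP input, and one pass of DCSEL on $G\cup G'$ starting from $F_1$ is exactly the EDCMSP procedure: Modified SGSEL is run on $(G\cup G')\setminus F_1$ to build a second piece $F_2$, the two pieces are concatenated, and the while-loop greedily absorbs vertices with one endpoint already chosen. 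So it suffices to check that the three-case argument behind the EDCMSP theorem survives these identifications.

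First I would clear the two one-sided cases. If the combined optimum $\mathbb{F}$ lies entirely inside $G'$ then $|\mathbb{F}|>p_1+p_2\ge p_2$, hence $\alpha(\mathbb{F})\le\alpha(OPT_2)$; applying the greedy rule that adds $t\in G_2$ as long as $I_{F_2}(t)>\alpha(F_2)$ pulls in enough vertices of $G'$ to meet the size bound without ever decreasing $\alpha$, and $F_2$ being a $\frac13$-approximation of $OPT_2$ finishes this case. The mirror case $\mathbb{F}\subseteq G$ is immediate from \cref{easiest_case} when $\alpha(\mathbb{F})\le\alpha(OPT_1)$, and when $\alpha(\mathbb{F})>\alpha(OPT_1)$ it follows from the same greedy-extension reasoning applied on the $G_1$ side, using that $F_1$ already $\frac13$-approximates $OPT_1$ under the constraint $p$.

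The substantive case — and the one I expect to be the main obstacle — is $\mathbb{F}\cap G\neq\varnothing$ and $\mathbb{F}\cap G'\neq\varnothing$. Here I would reuse the decomposition $\mathbb{F}=\mathbb{F}_1\cup\mathbb{F}_2\cup\mathbb{E}$ and bound the three parts separately: $\alpha(\mathbb{F}_1)<3\alpha(F_1)$ and $\alpha(\mathbb{F}_2)<3\alpha(F_2)$ via the exchange argument (write $\mathbb{F}_i=R\cup R'$ with $R=\mathbb{F}_i\cap OPT_i$, observe $\alpha(U)<\alpha(OPT_i)$ for every $U\subseteq R'$, then swap $R'$ for an equally large, at-least-as-heavy subset of $OPT_i\setminus\mathbb{F}_i$, contradicting optimality unless the claimed inequality holds), and bound $W(\mathbb{E})$ by the total incident weight of the node set $\Delta$ absorbed by the while-loop, whose stopping rule $I_F(u)>\alpha(F)$ simultaneously guarantees $\alpha$ does not drop. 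Feeding these three bounds into \cref{weighted_sum} then yields $\alpha(F_1\cup F_2\cup\Delta)\ge\frac13\alpha(\mathbb{F})$. The delicate point is precisely the $W(\mathbb{E})$ bound: unlike the idealized EDCMSP, where $G_1$ and $G_2$ are edge-disjoint by construction, a genuinely new graph $G'$ can contribute edges incident to existing vertices of $G$, so $\mathbb{E}$ may be large; one must argue that the greedy absorption either captures a constant fraction of that cross-weight or that the cross-edges left behind are individually below the running $\alpha$ and hence harmless — turning the heuristic estimate $W(\Delta)\ge\frac13 W(\mathbb{E})$ into a quantitative statement is where the real work lies.
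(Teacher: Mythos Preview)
Your route is correct in spirit but takes a substantially longer path than the paper. The paper's proof is a one-line reduction and does not invoke the EDCMSP machinery at all: it simply observes that $G'$ enlarges the search space, so running DCSEL from $F_1$ over $(G\cup G')\setminus F_1$ is procedurally the same as running the algorithm on $G\cup G'$ with the \emph{original} constraints $\{p,s\}$ from the start, and hence the $\tfrac13$ guarantee already established carries over verbatim. No decomposition $\mathbb{F}=\mathbb{F}_1\cup\mathbb{F}_2\cup\mathbb{E}$, no three-case split, and no cross-edge bound are needed.

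By contrast, you recast the problem as an EDCMSP instance with $G_1=G$ and $G_2=G'$, invent an auxiliary size constraint $p_2$ on $G'$ that the statement does not mention (the size constraint on $G\cup G'$ remains $p$, not $p_1+p_2$), and then replay the entire three-case analysis. This is not wrong---with $p_2=0$ the constraints match and the EDCMSP argument goes through---but it imports the very difficulty you flag as ``where the real work lies,'' namely the quantitative bound $W(\Delta)\ge\tfrac13 W(\mathbb{E})$, which the paper's direct reduction sidesteps entirely. The paper's approach buys brevity and avoids revisiting the delicate cross-edge estimate; your approach would buy nothing extra here, though it does show you understand the EDCMSP analysis well.
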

\begin{proof}
     Since $G'$ can be seen as a new search space of DCSEL, we can easily run the DCSEL, which adding some nodes to $F_1$ from $(G\cup G') \setminus F_1$ to form new approximation solution $F$. The ratio is hold because the procedure of $F$ is same as the initial constraints $\{p,$ $ s\}$ on $G \cup G'$.
\end{proof}
\begin{figure}[ht]
\centering
    \begin{subfigure}[b]{.6\linewidth}
        \centering
        \includegraphics[width=\linewidth]{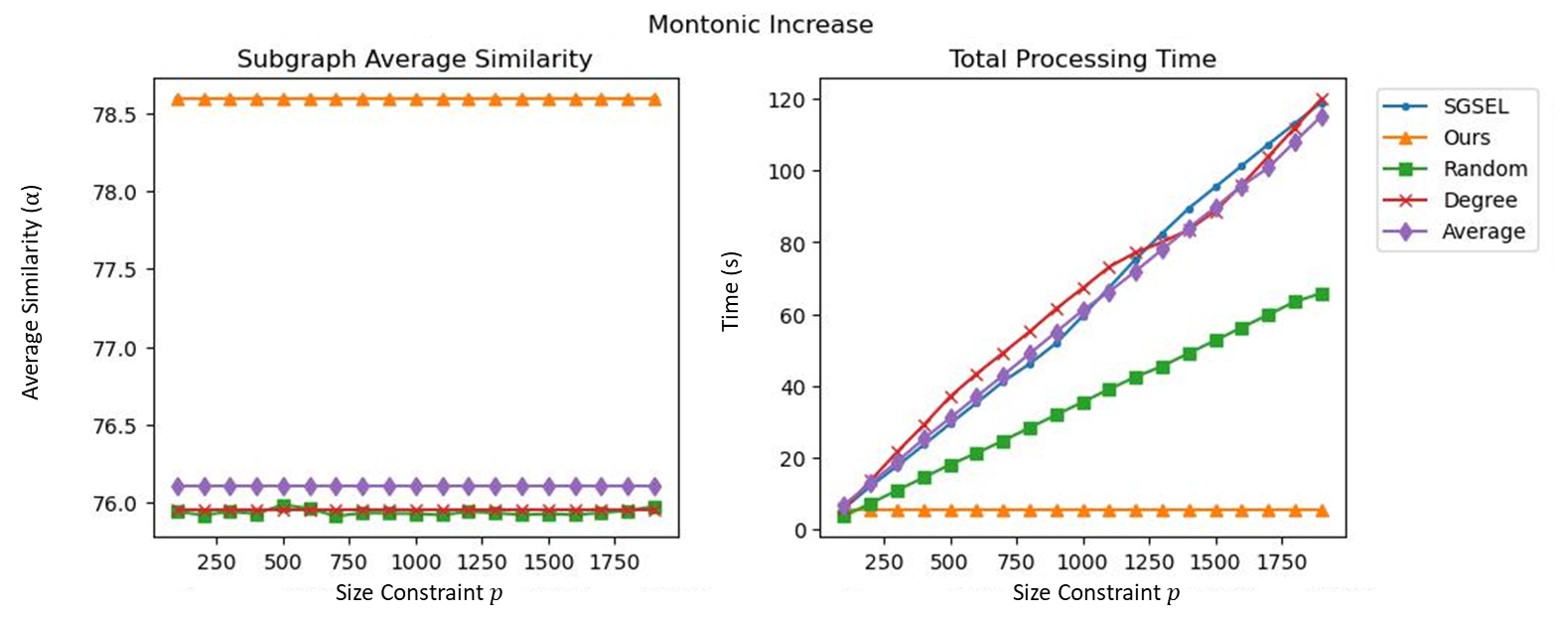}
        \caption{Dataset: Cora}
        \label{montonic_increase_cora}
     \end{subfigure}\\
    \begin{subfigure}[b]{.6\linewidth}
        \centering
        \includegraphics[width=\linewidth]{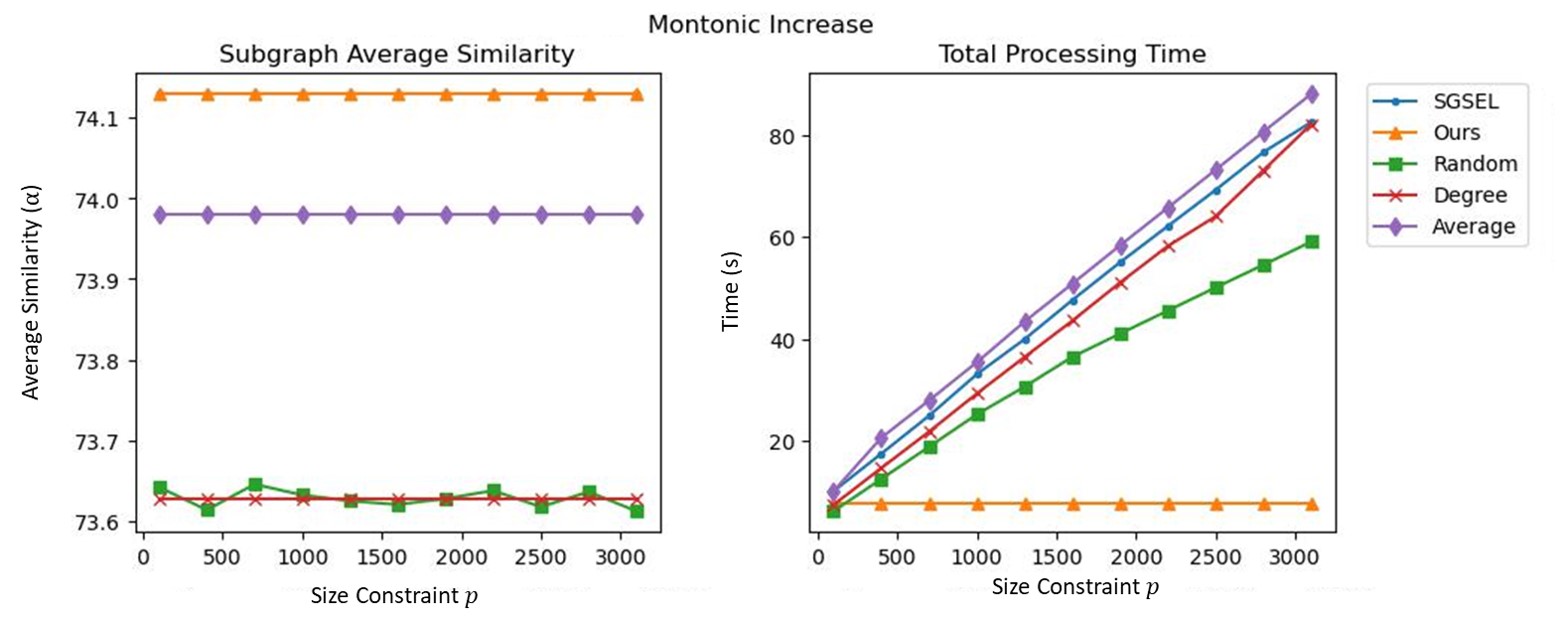}
        \caption{Dataset: Citeseer}
        \label{montonic_increase_citeseer}
    \end{subfigure}\\
    \begin{subfigure}[b]{.6\linewidth}
        \centering
        \includegraphics[width=\linewidth]{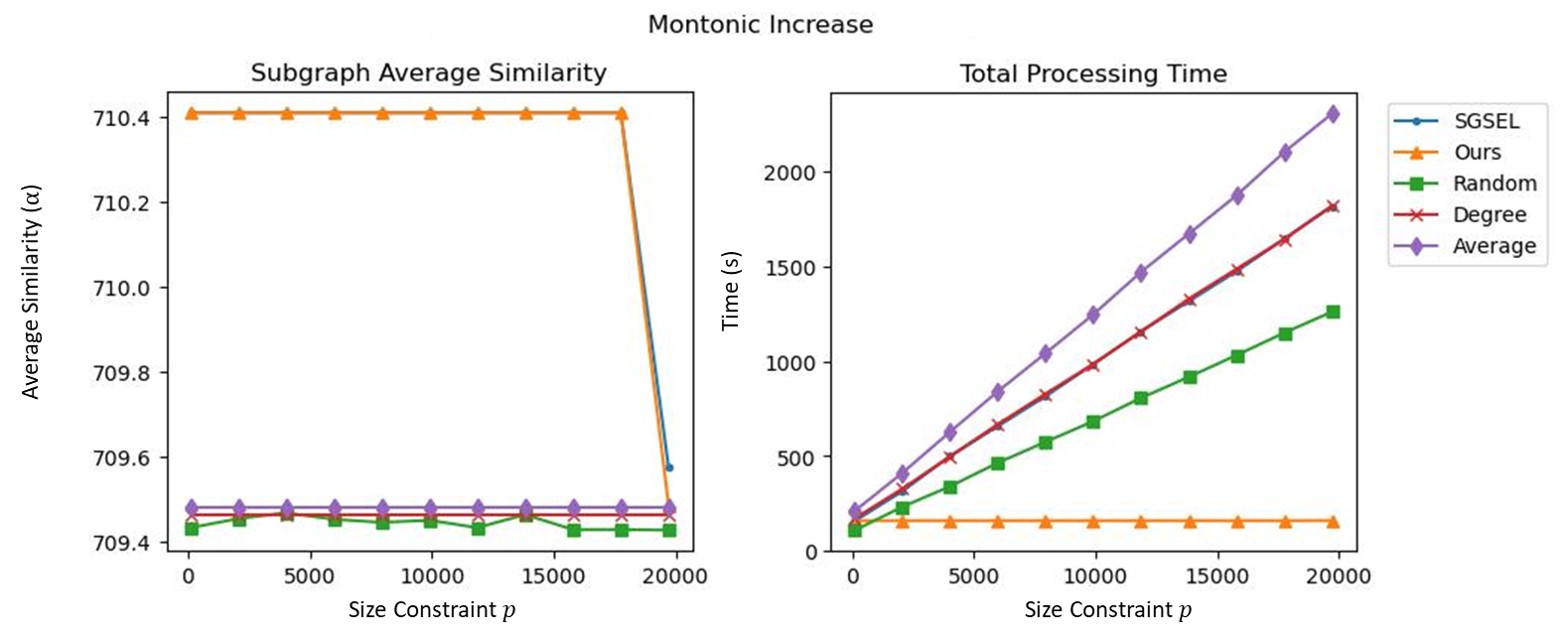}
        \caption{Dataset: Pubmed}
        \label{montonic_increase_pubmed}
    \end{subfigure}
    \caption{Size constraint $p$ monotonic increase}
    \label{montonic_increase}
\end{figure}

\begin{figure}[ht]
\centering
    \begin{subfigure}[b]{.6\linewidth}
        \centering
        \includegraphics[width=\linewidth]{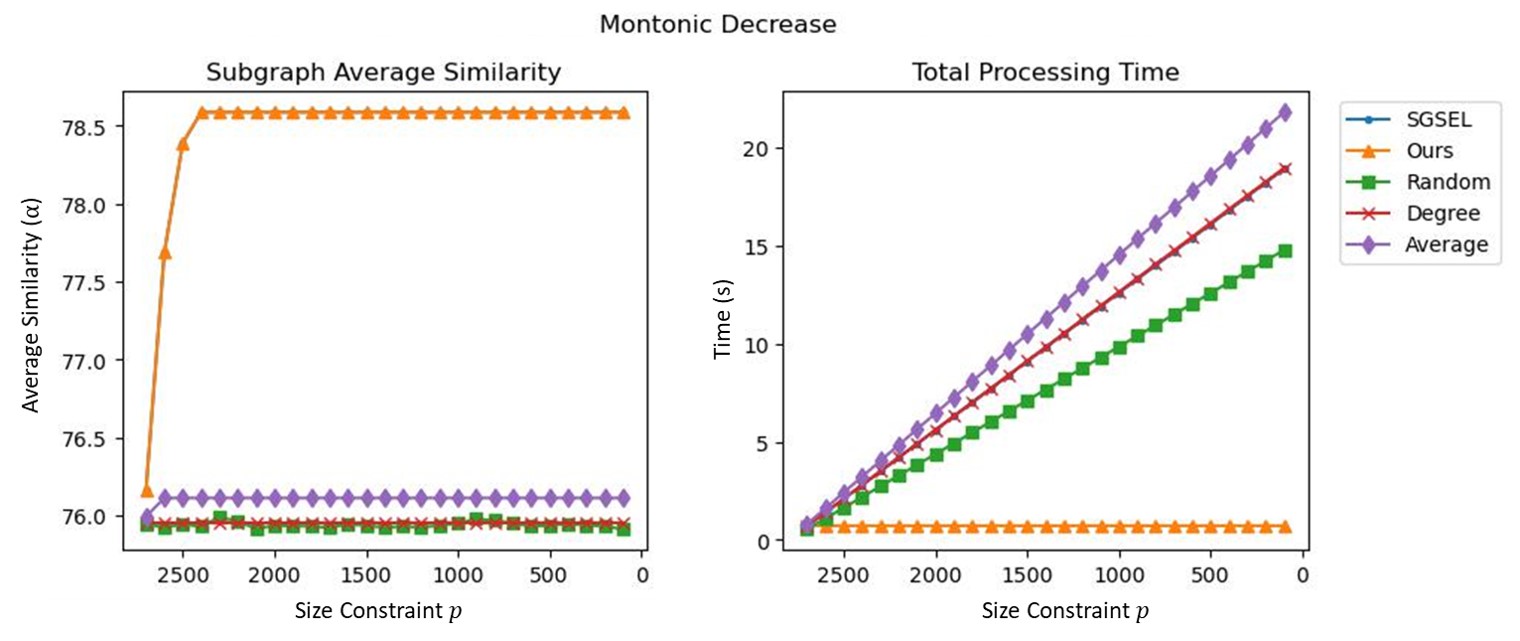}
        \caption{Dataset: Cora}
        \label{montonic_decrease_cora}
     \end{subfigure}\\
    \begin{subfigure}[b]{.6\linewidth}
        \centering
        \includegraphics[width=\linewidth]{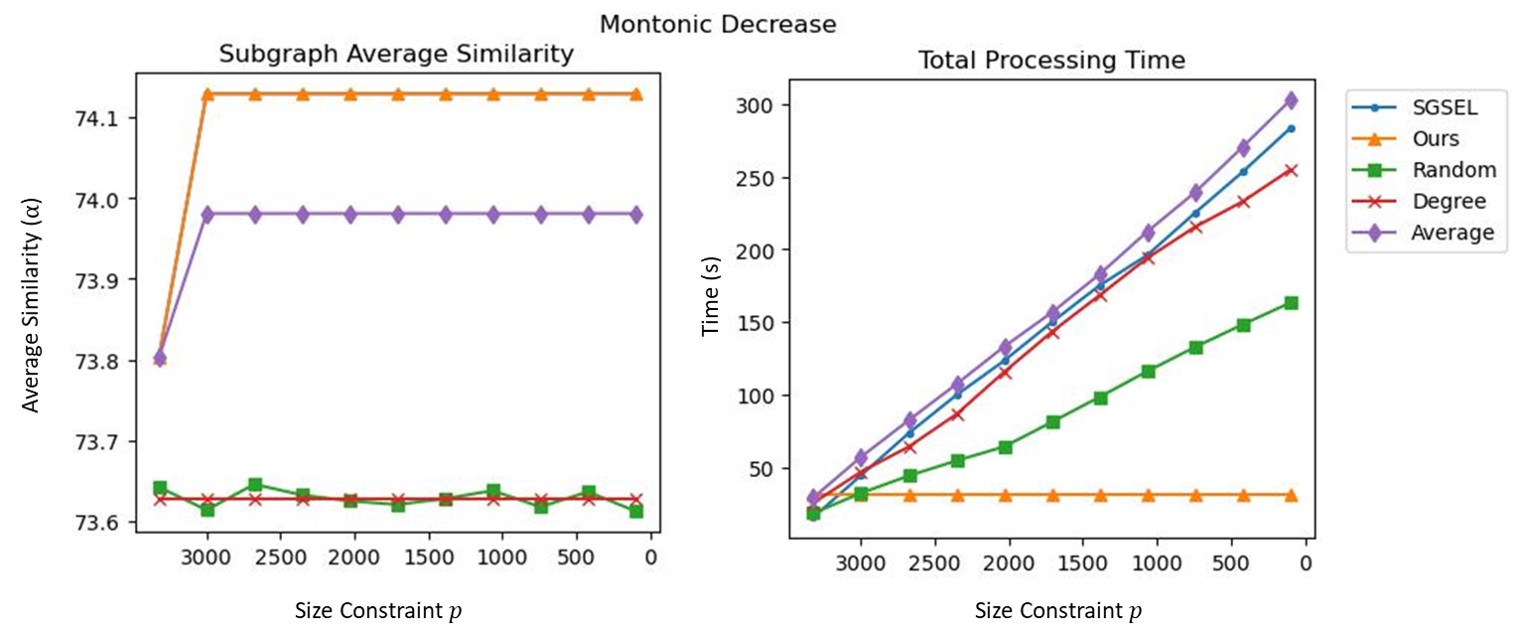}
        \caption{Dataset: Citeseer}
        \label{montonic_decrease_citeseer}
    \end{subfigure}\\
    \begin{subfigure}[b]{.6\linewidth}
        \centering
        \includegraphics[width=\linewidth]{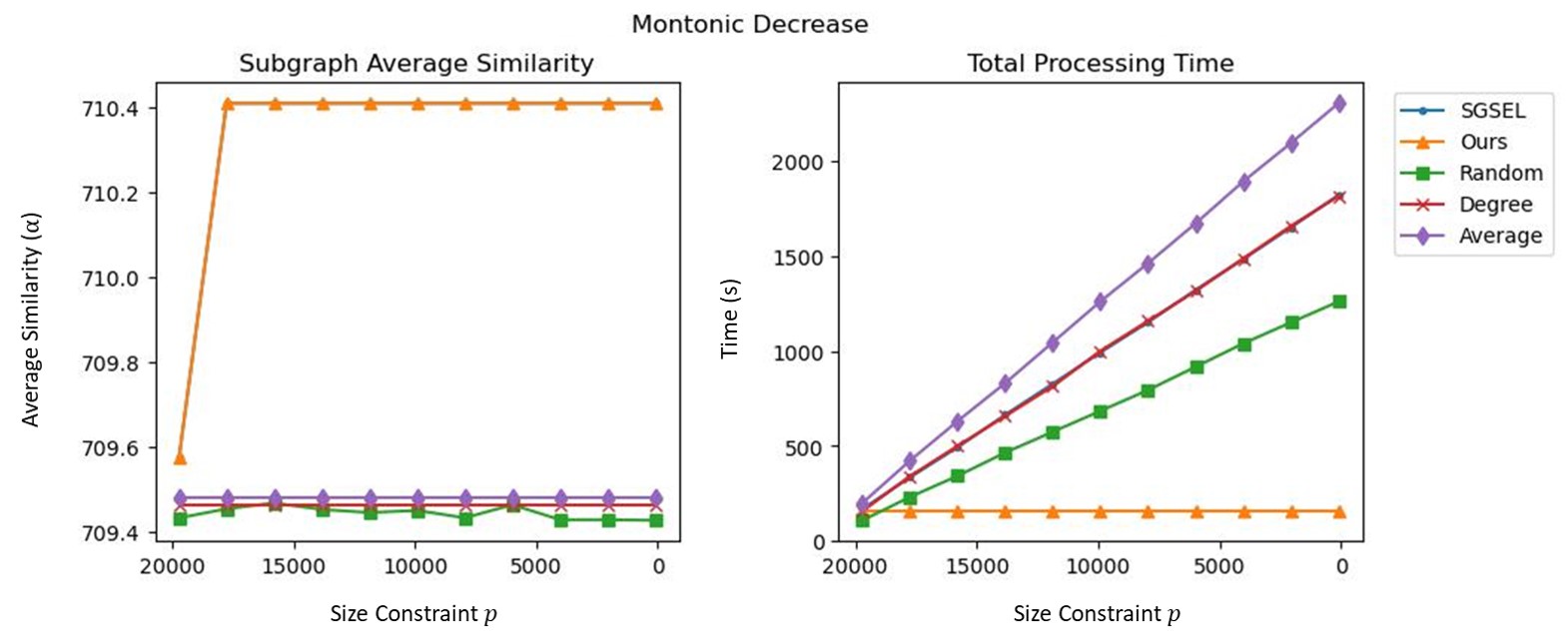}
        \caption{Dataset: Pubmed}
        \label{montonic_decrease_pubmed}
    \end{subfigure}
    \caption{Size constraint $p$ montonic decrease}
    \label{montonic_decrease}
\end{figure}

\begin{figure}[ht]
\centering
    \begin{subfigure}[b]{.6\linewidth}
        \centering
        \includegraphics[width=\linewidth]{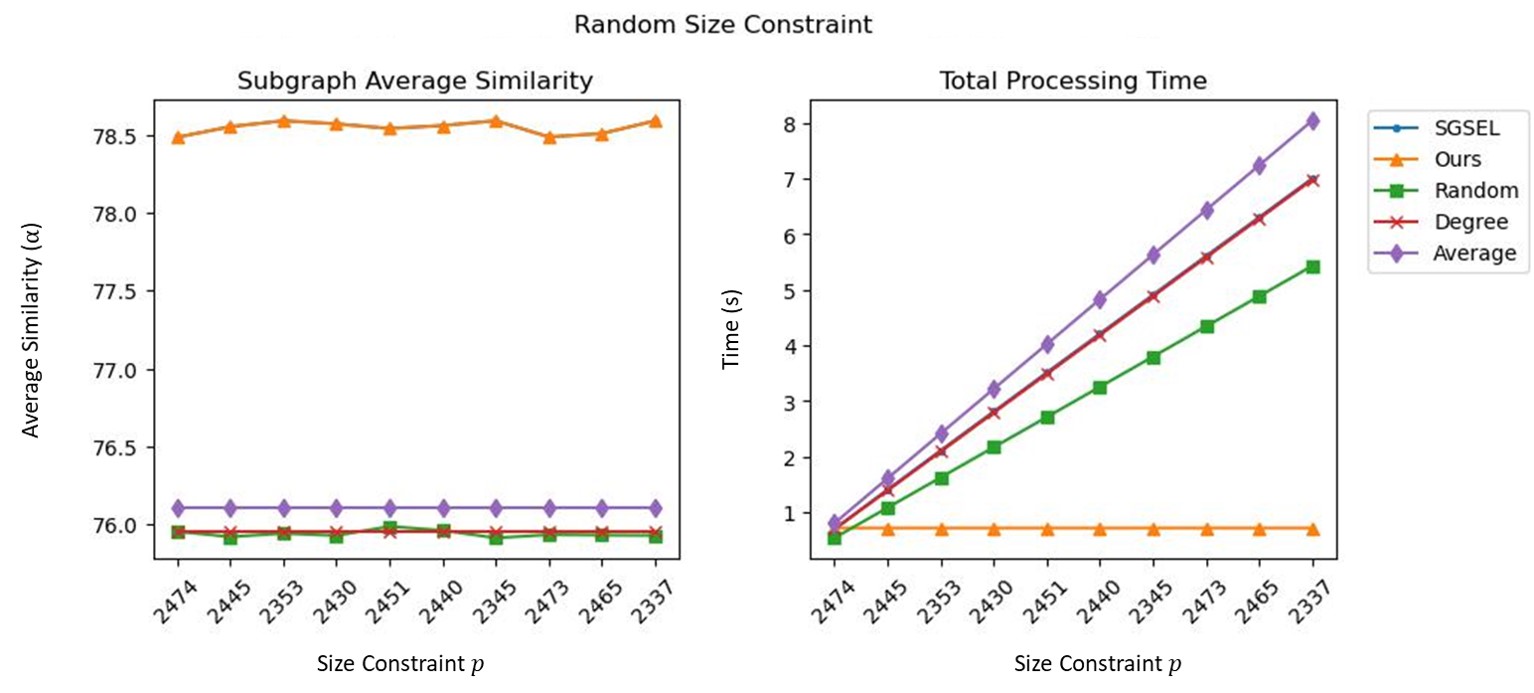}
        \caption{Dataset: Cora}
        \label{random_p_cora}
     \end{subfigure}\\
    \begin{subfigure}[b]{.6\linewidth}
        \centering
        \includegraphics[width=\linewidth]{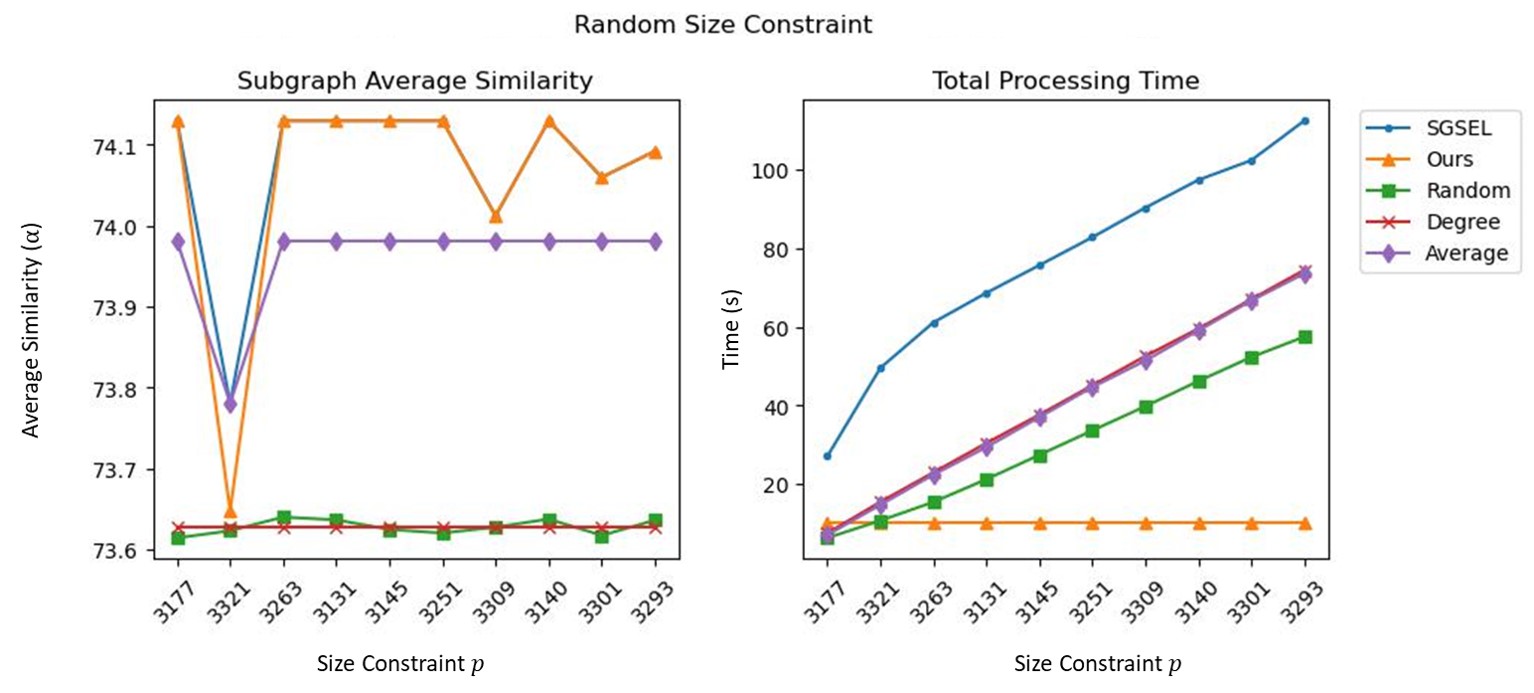}
        \caption{Dataset: Citeseer}
        \label{random_p_citeseer}
    \end{subfigure}\\
    \begin{subfigure}[b]{.6\linewidth}
        \centering
        \includegraphics[width=\linewidth]{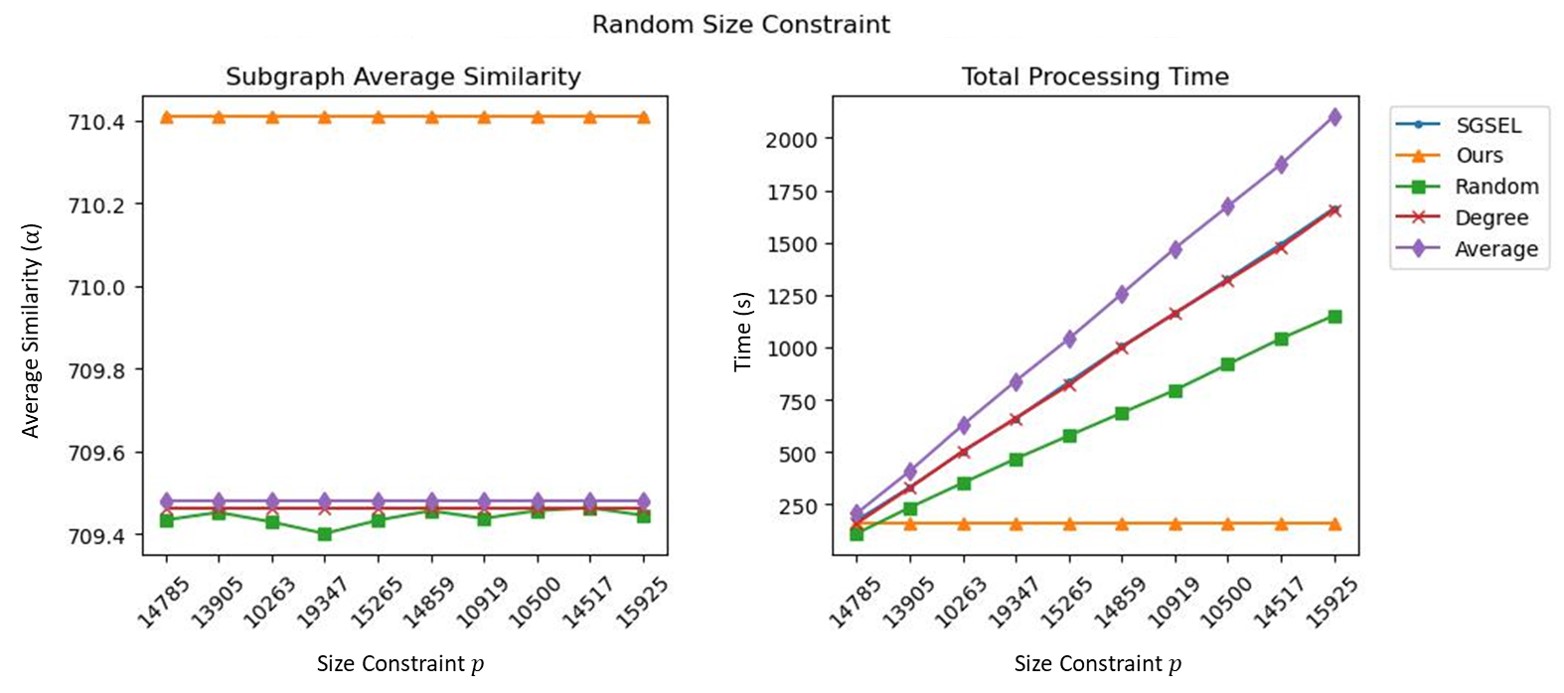}
        \caption{Dataset: Pubmed}
        \label{random_p_pubmed}
    \end{subfigure}
    \caption{Size constraint $p$ randomly selected}
    \label{randon_select}
\end{figure}

\section{Experiments}
In this section, we testify the effectiveness of our DCSEL algorithm, we compare it with four baselines. Our experiment is conducted on R9-5900x, RTX 3090 Ti, $4\times16$ GB memory, and Ubuntu 22.04.
\subsection{Datasets}
We employ three widely-adopted benchmark datasets for our experiment. Cora, Citeseer, and Pubmed are three citation network dataset. Node represent documents and edges represent citation links. Node features of Cora and Citeseer are documents encoded by bag-of words feature vector, while the features of Pubmed are TF/IDF weighted word frequency. Their statistic is summarized in Table \ref{basic_statistics}
\begin{table}[h]
\centering
\caption{Dataset Statistics}
\scriptsize
\begin{tabular}{||c c c c||} 
 \hline
 Dataset & \#Nodes & \#Edges & \#Features \\ [0.5ex] 
 \hline\hline
 Cora & 2,485 & 5,069 & 1,433 \\ 
 Citeseer & 2,110 & 3,668 & 53,703 \\
 Pubmed & 19,717 & 44,324 & 500 \\
 [1ex] 
 \hline
\end{tabular}
\label{basic_statistics}
\end{table}

\subsection{Baseline}
To demonstrate the effectiveness of our proposed approach, we compare our DCSEL with other four methods, including SGSEL, Random, Degree, and Average.
\begin{enumerate*}[label=(\Roman*)]
    \item SGSEL, which was introduced in Section \ref{Preliminaries} in detail.
    \item Random. Randomly remove nodes in each iteration and return maximum average similarity subgraph in whole epoch.
    \item Degree. Eliminate nodes based on their average incident similarity, i in each iteration, and then calculate the maximum average similarity throughout the entire epoch.
    \item Average. In each iteration, remove nodes based on their average incident similarity, i.e.,$\frac{I_{\Gamma_i}(u)}{|\mathcal{N}(u)|}$, and then calculate the maximum average similarity over the entire epoch.
\end{enumerate*}
\subsection{Result}
We evaluate our method under three different size constraint $p$ changes, including
\begin{enumerate*}[label=(\roman*)]
    \item monotonic increase, in figure \ref{montonic_increase}
    \item monotonic decrease, in figure \ref{montonic_decrease}
    \item random select, in figure \ref{randon_select}
\end{enumerate*}. We first compare DCSEL with other four baselines. We can find out that DCSEL achieve objective value as high as SGSEL in most scenario. This is because DCSEL has same procedure as SGSEL in beginning, thus it can extract the same induced graph. Although our proposed DCSEL is concatenate two induced subgraph, experiment result show DCSEL can find the approximation solution properly. In addition, our algorithm processes dynamic constraint member selection problem efficiently. This might because $F_1$ is large enough, i.e., $|G\setminus F_1| \ll |F_1|$. This observation explain that how does DCSEL work efficiently.

\bibliographystyle{splncs04}

\bibliography{refer.bib}

\end{document}